\newtheorem{lemma}{\textbf{Lemma}}
\newtheorem{theorem}{\textbf{Theorem}}
\newtheorem{remark}{\textbf{Remark}}
\newtheorem{corollary}{\textbf{Corollary}}
\begin{document}
\title{On the equivalence between multiclass processor sharing and random order scheduling policies}

\author{
  \begin{tabular}{ccc}
    Konstantin Avrachenkov  & Tejas Bodas \\
    INRIA, Sophia Antipolis & LAAS-CNRS, Toulouse
  \end{tabular}
}

\maketitle{}
\begin{abstract}

 {Consider a single server system serving a multiclass population.
Some popular scheduling policies for such system are the discriminatory processor sharing (DPS),
discriminatory random order service (DROS), generalized processor
sharing (GPS) and weighted fair queueing (WFQ). In this paper, we
propose two classes of policies, namely MPS (multiclass processor sharing)
and MROS (multiclass random order service), that generalize the four policies
mentioned above. For the special case when the multiclass population arrive according
to Poisson processes and have independent
and exponential service requirement with parameter $\mu$, we show that the
tail of the sojourn time distribution for a class $i$ customer in a
system with the MPS policy is a constant multiple of
the tail of the waiting time distribution of a class $i$  customer in a
system with the MROS policy. This result implies that for a class $i$ customer,
the tail of the sojourn time distribution in a system with the DPS (GPS)
scheduling policy is a constant multiple of the tail of the waiting time
distribution in a system with the DROS (respectively WFQ) policy.}

\end{abstract}

\section{Introduction}

 Consider a single server system with multiclass customers.
 Some commonly used scheduling policies in such
 multiclass system are DPS, DROS, GPS and WFQ.
 A quick overview of these policies is as follows.
Policies like GPS and DPS are variants of the processor sharing
policy  {where the server} can serve multiple customers from the system simultaneously.
In case of GPS, a separate queue is maintained for each customer
class and the total service capacity of the server is shared among customers
of the different classes in proportion to predefined weights $p_i.$
The GPS scheduling policy is often considered as a generalization of
the head-of-line processor sharing policy (HOLPS) as described in
\cite{Aalto07,Demers90,KMK}. (Refer \cite{Fayolle79,Morrison93} for details about
HOLPS). As a generalization of HOLPS, GPS maintains a FIFO scheduling
policy  {within the queue of each class} and only the head-of-line customers
of different classes are allowed to share the processor.
The share of the server for a  head-of-line Class~$i$ customer is
proportional to the weight $p_i$ and is independent of the number of
other customers in the queue. The service rate received by the customer
is precisely given by $\frac{p_i}{\sum_{j=1}^{N} p_j \phi_j}$ where $\phi_j = 1$ if
the queue has at least one class $j$ customer and  $\phi_j = 0$
otherwise. Refer  Parekh and Gallager \cite{Parekh93}, Zhang et al.
\cite{Zhang94} for an early analysis of the model.

In case of DPS, the total service capacity is shared among all the customers present in the
system and not just among the head-of-line customers of different classes.
The share of the server for a customer of a class is not only
in proportion to the class weight, but also depends on the number
of multiclass customers present in the queue. In particular,
a Class~$i$ customer in the system is served at a rate of
$\frac{p_i}{\sum_{j=1}^{N} p_j n_j}$ where $n_j$ denotes the
number of Class~$j$ customers in the system.
The DPS system  was first introduced by Kleinrock
\cite{Kleinrock67-2} and subsequently analyzed by several authors
\cite{Aetal05,Fayolle80,Haviv07-2,Haviv97,Kim04}.
See \cite{AAA06} for a survey of various results on DPS.

The DROS and WFQ scheduling policies are also characterized
by an associated weight for each customer class. However these
policies are not a variant of the processor sharing policies
and hence their respective server can only serve one customer
at a time. The DROS and WFQ policies differ in their exact rule for choosing
the next customer. In the DROS policy, the probability of
choosing a customer for service depends on the weights and the
number of customers of the different classes in the queue.
A Class~$i$ customer is thus chosen with a probability of
$\frac{p_i}{\sum_{j=1}^{N} p_j n_j}$ where $n_j$ denotes the
number of Class~$j$  {customers waiting in the system} for service.
DROS policy is also know as relative priority policy and was first
introduced by Haviv and wan der Wal \cite{Haviv97}. For more
analysis of this policy we refer to \cite{Haviv07,Kim11}.
In the WFQ policy, a separate queue for each class is maintained
and the next customer is chosen randomly from among the
head-of-line customers of different classes. As in case of the
GPS scheduling policy, a FIFO scheduling policy is used within
each queue for a class. WFQ can be seen as a packetised version
of GPS and the probability of choosing a head-of-line Class~$i$
customer for service is given by
$\frac{p_i}{\sum_{j=1}^{N} p_j \phi_j}$ where $\phi_j$ is as defined
earlier. Refer Demers \cite{Demers90} for the detailed analysis
of the WFQ policy.

It is interesting to note that for a Class~$i$ customer, the service rate
received {\it in DPS} and the probability of being chosen next for service
{\it in case of DROS} is given by $\frac{p_i}{\sum_{j=1}^{N} p_j n_j}.$ Similarly,
the service rate received {\it in GPS} and the probability of being chosen next
for service {\it in case of WFQ} is $\frac{p_i}{\sum_{j=1}^{N} p_j \phi_j}.$
This similarity in the scheduling rules motivates us to compare
the  {tail of the waiting time and sojourn time distributions}
of the multiclass customers
with these scheduling policies.  {Assuming identically distributed service
requirements for all customers}, we will show that the  {tail of the waiting time distribution} of a
Class~$i$ customer in a system with DROS (WFQ) scheduling
policy is $\rho$ times the  {tail of the sojourn time distribution} of any Class~$i$
customer with DPS (resp. GPS) scheduling policy. This is a generalization
of \cite{Borst03}, where the equivalence has been established between single-class
processor sharing and random order service discipline.

\textit{Organization:} In the next section,
we introduce a generalized notion of multiclass processor sharing (MPS)
and random order service (MROS) policies. The DPS, GPS, DROS
and WFQ policies will turn out to be special cases of MPS and MROS.
In Section \ref{sec:main}, we show that the  {tail of the sojourn time distribution}
of a Class $i$ customer with MPS scheduling is equivalent to the
 {tail of the waiting time distribution} of a Class~$i$
customer with MROS policy. As a special case, this proves the mentioned
equivalences among the four multiclass scheduling policies.

{\textit{Notation:}
 {We use $N$ to denote the total number of customer classes in the system.
Let $\lambda_i$ denote the arrival rate for a Class~$i$ customer,
$i= 1 \ldots N.$ Let $\sum_{i=1}^N \lambda_i = \Lambda,$ $\rho_i = \frac{\lambda_i}{\mu}$ and $\rho =
\sum_{i=1}^N \rho_i.$ Further, let $p_i$ denote a weight parameter associated with
a Class~$i$ customer.}

{\textit{Assumptions:}
 
{We assume that the service requirement of each customer is independent and
exponentially distributed with rate $\mu$.  Thus the service requirements are independent
of their class. We also assume that customers from different classes arrive according to
independent Poisson processes. For the purpose of stability, we assume
that $\Lambda < \mu.$}

\section{Generalized multiclass scheduling policies}
In this section, we will describe two multiclass scheduling policies that
are a generalization of policies such as DPS, DROS, GPS and WFQ. The two
policies are based on the processor sharing and random order service mechanism
and will be labeled as MPS and MROS respectively.

The MPS scheduling policy is a multiclass processor sharing policy where the server can
serve multiple customers simultaneously. A separate queue for each customer
class is maintained and a FIFO  {scheduling} policy is used within each queue
of a class.  {The MPS scheduling policy is
parameterized by a vector $\bar{\alpha}=(\alpha_1,\ldots,\alpha_N)$ that characterizes
the maximum permisible number of customers of each class that can be served
simultaneously with other customers. We shall henceforth use the notation MPS$(\bar{\alpha})$
when we talk about an MPS policy with parameter $\bar{\alpha}.$
Let $n_i$ denote the instantaneous number of Class~$i$ customers in the queue and
$\bar{n}:=(n_1,\ldots,n_N)$ denotes the corresponding state in
the MPS system.} Let $\beta_i(\bar{n})$ denote the number of
Class~$i$ customers under service when the state of the MPS system is $\bar{n}$. Then, clearly
$\beta_i(\bar{n}) = \mbox{~min~}(n_i,\alpha_i).$ In other words, if $n_i \leq \alpha_i,$ then all the
Class~$i$ customers present in the queue are being served simultaneously for
$i = 1, \ldots, N$. However if $n_i > \alpha_i,$
then only  the first $\alpha_i$ customers of Class~$i$ in its queue
are served simultaneously. It should be noted that due to the FIFO policy
within each queue of a class, only the first $\beta_i(\bar{n})$ customers in the
queue are served at any time.  To lighten some of the notation,
we shall drop the dependence on $\bar{n}$ and use only $\beta_i$
when the context is clear.   {For an MPS$(\bar{\alpha})$ scheduling policy in state
$\bar{n},$} the service rate received by a particular
Class~$i$ customer which is in service is given by
 $\frac{p_i}{\sum_{j=1}^{N} p_j \beta_j}.$  When $\alpha_i = \infty,$
 for $i=1 \mbox{~to~} N, $ the corresponding  scheduling policy will be
 denoted by MPS$(\bar{\infty}).$ In this case,
 $\beta_i = \mbox{~min~}(n_i,\infty) = n_i$ and therefore MPS$(\bar{\infty})$
 corresponds to the DPS scheduling policy. Similarly if
 $\bar{e}=(1,\ldots, 1),$ then MPS$(\bar{e})$ corresponds to the GPS
 scheduling policy where only the head-of-line customers of each class
 can be served.

In a similar manner, we can define the MROS$(\bar{\alpha})$ scheduling policy
where $\bar{\alpha}=(\alpha_1,\ldots,\alpha_N)$
denotes  {the vector of multiclass customers} from which the subsequent
customer is chosen for service. As in case of the MPS policy, note that
a separate FIFO queue for each customer class is also maintained for the MROS system.
At any given time, the first $\beta_i = \mbox{~min~}(n_i,\alpha_i)$ customers
are candidates for being chosen for service while
the remaining $n_i-\beta_i$ customers have to wait for their turn.
 {Note that the state $\bar{n}$ for an MROS$(\bar{\alpha})$ scheduling
policy denotes the vector of waiting multiclass customers present in the system.}
In the MROS$(\bar{\alpha})$ system, a Class~$i$ customer within the first $\beta_i$
customers in its queue will be chosen next for service with probability $\frac{p_i}{\sum_{j=1}^{N} p_j \beta_j}.$
As in case of the MPS scheduling, MROS$(\bar{\infty})$ corresponds to the DROS
policy whereas MROS$(\bar{e})$ corresponds to the WFQ policy.

\begin{remark}
A policy closely related to the MPS discipline is the
\textit{limited processor sharing} (LPS)
policy. LPS is a single class processor sharing policy
parametrized by an integer $c$ where $c$
denotes the maximum number of customers that can be
served simultaneously. Here $c = \infty$ corresponds to
the \textit{processor sharing} policy while $c = 1$ corresponds
to FCFS policy. LPS can also be viewed as a
special case of the MPS policy when there is a single
service class for the arriving customers. See \cite{Avi-Itzhak88,Zhang11} more more
details about the LPS-c policy.
\end{remark}

Having introduced the generalized multiclass scheduling policies, we
shall now establish an equivalence relation between the tail of the sojourn time distribution of a
Class~$i$ customer in MPS system with the tail of the waiting time distribution
of a Class~$i$ customer in MROS system.

\section{Comparing the sojourn and waiting time distributions in MPS and MROS}
\label{sec:main}
The analysis in this section is inspired from that in \cite{Borst03}
where a similar result is established for the case of a single class of
population.  {For a given state of
$\bar{n}= (n_1, \ldots, n_N),$ define $n:=\sum_{i=1}^N n_i.$}
Let random variable $\bm{S_i}(\bar{\alpha},\bar{n})$ denote the conditional
sojourn time experienced by an arriving Class~$i$ customer  {that sees the
$MPS(\bar{\alpha})$ system in state $\bar{n}$.} The corresponding unconditional random variable will be denoted by
$\bm{S_i}(\bar{\alpha}).$ We shall occasionally use the notation
$MPS(\bar{\alpha}, \bar{n})$ to denote the MPS$(\bar{\alpha})$ system with $\bar{n}$ customers.
Along similar lines, let the random variable $\bm{W_i}(\bar{\alpha},\bar{n})$ denote
waiting time (time until chosen for service) experienced by an arriving Class~$i$
customer that sees  {the MROS$(\bar{\alpha})$ system in state $\bar{n},$ i.e., it sees a vector of
$\bar{n}$ waiting customers in the system}. This system will be often denoted as
$MROS(\bar{\alpha}, \bar{n})$ and the unconditional random variable will
be denoted by $\bm{W_i}(\bar{\alpha}).$ Let $ \mathbb{P}$ and $\mathbb{P'}$ denote
the probability distribution of the random variables
$\bm{S_i}(\bar{\alpha},\bar{n})$ and $\bm{W_i}(\bar{\alpha},\bar{n})$ respectively.
(The dependence of these distributions on $\bar{n}$ have been suppressed for notational
convenience.) We now state the main result of this paper.

\begin{theorem}
$\rho P(\bm{{S}_i}(\bar{\alpha}) > t) =  P(\bm{{W}_i}(\bar{\alpha}) > t)$ for
$i = 1,\ldots,N.$
\end{theorem}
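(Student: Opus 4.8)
The plan is to prove the identity in three steps: a state-by-state conditional identity linking the two tagged-customer problems, a relation between the two stationary laws that manufactures the factor $\rho$, and a final unconditioning via PASTA. Throughout, let $\bar{u}_k$ denote the $k$-th coordinate unit vector, and let $\pi(\bar{n})$ be the stationary distribution of the population process in the MPS$(\bar\alpha)$ system.

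First I would prove the conditional identity $P(\bm{S_i}(\bar\alpha,\bar n) > t) = P(\bm{W_i}(\bar\alpha,\bar n) > t)$ for every state $\bar n$ by coupling two absorbing Markov chains that track the tagged customer. In the MPS chain the state is the pair (population $\bar m$, FIFO rank $r$ of the tagged Class~$i$ customer within its class), started at $(\bar n + \bar{u}_i,\, n_i+1)$; absorption is the tagged customer's departure. In the MROS chain the state is (waiting configuration $\bar w$, rank $r$ of the tagged customer among waiting Class~$i$ customers), started at the same $(\bar n+\bar{u}_i,\, n_i+1)$; absorption is the instant the tagged customer is selected for service. In both chains arrivals act identically, and the competing ``service events'' occur at total rate $\mu$ whenever the relevant set is nonempty: in MPS the completion of one of the in-service customers, in MROS a service completion followed by a fresh selection. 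The crux is that at each such event both chains remove a Class~$j$ customer with the same probability $p_j\beta_j(\cdot)/\sum_k p_k\beta_k(\cdot)$, and by exchangeability of the memoryless clocks the removed Class~$j$ customer is uniform over the first $\beta_j=\min(\,\cdot\,,\alpha_j)$ customers in FIFO order in both systems. Hence the effect on the tagged customer's rank, the evolution of the configuration, and the absorption condition ($r\le\alpha_i$) coincide, so the two chains are identical and $\bm{S_i}(\bar\alpha,\bar n)\stackrel{d}{=}\bm{W_i}(\bar\alpha,\bar n)$. This coupling, in particular the careful matching of the FIFO-within-class removal under the $\beta_j=\min(n_j,\alpha_j)$ service-eligibility rule, is the step I expect to be the main obstacle.

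Next I would establish the stationary relation. The key observation is that the MROS$(\bar\alpha)$ system is Markov on the state space $\{\text{empty}\}\cup\{(\text{busy},\bar w)\}$, since the in-service customer influences nothing except that it completes at rate $\mu$ and the next selection depends only on $\bar w$. Writing $g(\bar w)$ for the stationary probability of $(\text{busy},\bar w)$ and $r_j(\cdot)=\mu\, p_j\beta_j(\cdot)/\sum_k p_k\beta_k(\cdot)$, the balance equation at $(\text{busy},\bar w)$ with $\bar w\neq 0$ reads $(\Lambda+\mu)g(\bar w)=\sum_{i:\,w_i\ge 1}\lambda_i\, g(\bar w-\bar{u}_i)+\sum_j r_j(\bar w+\bar{u}_j)\, g(\bar w+\bar{u}_j)$, which has exactly the same coefficients as the balance equation for $\pi(\bar w)$ in the MPS system. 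I would therefore verify that $g(\bar w)=\rho\,\pi(\bar w)$ solves the full MROS system: the bulk equations scale by $\rho$ and reduce to the MPS equations, while the boundary equations at $(\text{busy},0)$ and at the empty state close consistently using $\rho\mu=\Lambda$ and the fact that both systems are work-conserving, so that the empty-state probability equals $1-\rho$ in each. This yields $P(\text{MROS busy with waiting configuration }\bar n)=\rho\,\pi(\bar n)$, with the normalization $\sum_{\bar n}g(\bar n)=\rho$ automatically correct.

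Finally I would combine the two steps by PASTA. A Poisson Class~$i$ arrival to MROS$(\bar\alpha)$ that finds the system empty is served immediately and contributes nothing to $P(\bm{W_i}(\bar\alpha)>t)$ for $t\ge 0$, so $P(\bm{W_i}(\bar\alpha)>t)=\sum_{\bar n}g(\bar n)\,P(\bm{W_i}(\bar\alpha,\bar n)>t)$. Substituting the conditional identity from Step~1 and then the stationary relation from Step~2 gives $P(\bm{W_i}(\bar\alpha)>t)=\sum_{\bar n}\rho\,\pi(\bar n)\,P(\bm{S_i}(\bar\alpha,\bar n)>t)=\rho\,P(\bm{S_i}(\bar\alpha)>t)$, where the last equality is the MPS unconditioning, again by PASTA. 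This is precisely the claimed equality, and specializing $\bar\alpha=\bar\infty$ and $\bar\alpha=\bar e$ recovers the DPS/DROS and GPS/WFQ equivalences.
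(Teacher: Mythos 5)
Your proposal is correct and follows essentially the same route as the paper: a coupling argument giving the conditional identity $\bm{S_i}(\bar{\alpha},\bar{n}) \overset{D}{=} \bm{W_i}(\bar{\alpha},\bar{n})$, a verification via global balance equations that the stationary probability of the busy MROS system with waiting configuration $\bar{n}$ equals $\rho\,\pi(\bar{n})$, and unconditioning by PASTA. Your treatment is in fact slightly more explicit than the paper's on one point of the coupling --- that the removed Class~$j$ customer is uniform over the first $\beta_j$ FIFO positions in both systems by memorylessness --- but this is a refinement of the same argument, not a different one.
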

\begin{proof}
As in \cite{Borst03}, our aim is to first provide a coupling $\left(\bm{\hat{S}_i}(\bar{\alpha},\bar{n}),
\bm{\hat{W}_i}(\bar{\alpha},\bar{n})\right)$ with the corresponding law denoted by
$\hat{\mathbb{P}}$ such that
\begin{itemize}
 \item $\bm{\hat{S}_i}(\bar{\alpha},\bar{n}) \overset{D}{=} \bm{{S}_i}(\bar{\alpha},\bar{n})$ and
 $\bm{\hat{W}_i}(\bar{\alpha},\bar{n}) \overset{D}{=} \bm{{W}_i}(\bar{\alpha},\bar{n})$
\item $\hat{\mathbb{P}}\left(\bm{\hat{S}_i}(\bar{\alpha},\bar{n}) = \bm{\hat{W}_i}(\bar{\alpha},\bar{n}) \right) = 1$
\end{itemize}

The second requirement will help us to show that the two distributions
$\mathbb{P}$ and $\mathbb{P'}$ are equal. This follows from the coupling inequality
 {(see \cite{Lindvall02} for more on coupling inequalities)}
\begin{equation}
\label{eq:coupling}
\left\Vert \mathbb{P} - \mathbb{P'} \right\Vert \leq 2\hat{\mathbb{P}}
\left(\bm{\hat{S}_i}(\bar{\alpha},\bar{n}) \neq \bm{\hat{W}_i}(\bar{\alpha},\bar{n}) \right).
\end{equation}
Such a coupling is precisely obtained as follows.

Consider two tagged Class~$i$ customers $X$ and $Y$ that arrive to
a $MPS(\bar{\alpha},\bar{n})$ and a $MROS(\bar{\alpha},\bar{n})$ system respectively.
This means that at the arrival instant of customer X in the
$MPS(\bar{\alpha},\bar{n})$ system, there are $n_i$ Class~$i$ customers
already present in the system. Similarly, at the arrival
instant of customer Y in $MROS(\bar{\alpha},\bar{n}),$ there are $n_i$
customers of Class~$i$ that are waiting for service in the queue.
Recall that $\bar{\beta} = (\beta_1,\ldots,\beta_N)$ where $\beta_i$ in the MPS system
denotes the number of Class~$i$ customers that are receiving service. In the MROS
system, $\beta_i$ denotes those (waiting) Class~$i$ customers from which
the next customer could be chosen. Note that since $\sum_{i=1}^N n_i = n,$ with the arrival of
customer $X,$ the $MPS(\bar{\alpha},\bar{n})$ system has $n+1$ customers.
Similarly, with the arrival of customer $Y,$ the $MROS(\bar{\alpha},\bar{n})$ system
has $n+2$ customers of which one customer is in service and the remaining
$n+1$ customers (including customer Y) are waiting for service.
We will now specify the rule for forming the required coupling.
Since the customers can be distinguished by their class index and also the position in
their respective queues, we couple the $n+1$ customers in $MPS(\bar{\alpha},\bar{n})$ with the $n+1$
waiting customers in the $MROS(\bar{\alpha},\bar{n})$ system based on their class
and queue position. The coupling must be such that the coupled customers belong
to the same class and invariably have the same queue position in their respective queues.
It goes without saying that the tagged customers $X$ and $Y$ are also coupled.
As in \cite{Borst03}, we also couple the subsequent arriving customers and let $D_1, D_2 \ldots$
denote i.i.d random variables with an exponential distribution of rate $\mu.$
These random variables correspond to service times of a customer in service in
$MROS(\bar{\alpha}).$ At the service completion epoch, pick a pair of coupled customers
randomly. This random picking is with a distribution such that the chosen pair is of Class~$i$ with
probability $\frac{p_i}{\sum_{j=1}^N p_j\beta_j}.$
When the randomly chosen pair is of Class~$i$,  a class $i$
customer departs from the MPS system while such a customer is taken for service in the MROS system.
This process is repeated till the tagged pair $(X,Y)$ leaves the system.
Clearly, this joint probability space is so constructed that the random variables
$\bm{\hat{S}_i}(\bar{\alpha},\bar{n}) = \bm{\hat{W}_i}(\bar{\alpha},\bar{n})$ $\hat{\mathbb{P}}$--a.s.
From Eq. \eqref{eq:coupling}, this implies that
\begin{equation}
\label{eq:cond_distr_equal}
\bm{{S}_i}(\bar{\alpha},\bar{n}) \overset{D}{=} \bm{{W}_i}(\bar{\alpha},\bar{n}).
\end{equation}

Now let random vectors $\bm{N}^{MPS}$ (resp. $\bm{N}_1^{MROS}$) denote the vector of multiclass
customers present in the system in steady state (resp. waiting in the system in steady state in MROS).
The subscript $1$ in $\bm{N}_1^{MROS}$ is used to indicate a busy server.
 {Since the arrival process is Poisson, the unconditional} probabilities are
given by the following
\begin{eqnarray}
 P(\bm{{S}_i}(\bar{\alpha}) > t)  &=& \sum_{\bar{n}} P(\bm{N}^{MPS}=\bar{n})
 P(\bm{{S}_i}(\bar{\alpha},\bar{n})>t). \nonumber \\
 \end{eqnarray}
Similarly, we have
\begin{eqnarray}
 P(\bm{{W}_i}(\bar{\alpha}) > t)  = \sum_{\bar{n}} P(\bm{N}_1^{MROS}=\bar{n})
 P(\bm{{W}_i}(\bar{\alpha},\bar{n})>t).
 \end{eqnarray}

Now if $P(\bm{N}_1^{MROS}=\bar{n}) = \rho P(\bm{N}^{MPS}=\bar{n})$
is true, then from Eq. \eqref{eq:cond_distr_equal},
the statement of the theorem follows and this would complete the proof.
In the following lemma, we shall prove that indeed
$P(\bm{N}_1^{MROS}=\bar{n}) = \rho P(\bm{N}^{MPS}=\bar{n}).$
\end{proof}

\bigskip

\begin{lemma}
 $P(\bm{N}_1^{MROS}=\bar{n}) = \rho P(\bm{N}^{MPS}=\bar{n})$
 for $ \bar{n}$ such that $|\bar{n}| \geq 0.$
\end{lemma}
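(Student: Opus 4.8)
The plan is to recognise that both systems are continuous-time Markov chains whose rates are assembled from the same quantities $\beta_i(\bar n)=\min(n_i,\alpha_i)$, and then to prove the identity by a guess-and-verify argument on the stationary balance equations. Throughout let $\mathbf{e}_i$ denote the $i$-th standard unit vector (not to be confused with $\bar e=(1,\ldots,1)$), and write $d_i(\bar n):=\mu\,p_i\beta_i(\bar n)/\sum_{j}p_j\beta_j(\bar n)$, which vanishes whenever $n_i=0$.

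First I would set up the MPS$(\bar\alpha)$ chain on the total class-count vectors $\{\bar n:\bar n\ge 0\}$: an arrival raises $\bar n$ by a Class~$i$ customer at rate $\lambda_i$, and a Class~$i$ departure lowers it at rate $d_i(\bar n)$. Its stationary law is $\pi(\bar n):=P(\bm{N}^{MPS}=\bar n)$. Second --- and this is the step I would stress --- I would argue that the \emph{waiting}-count vector of MROS$(\bar\alpha)$ is itself Markov once a single empty state $\emptyset$ is adjoined. The point is that a service completion occurs at rate $\mu$ irrespective of the class in service, and the class $j$ of the \emph{next} customer (hence the change in the waiting vector) depends only on the current waiting configuration $\bar n$, being selected at rate $\mu\,p_j\beta_j(\bar n)/\sum_k p_k\beta_k(\bar n)=d_j(\bar n)$. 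Thus the in-service class never feeds back into the waiting-count dynamics, so the transition $(\bar n)\to(\bar n-\mathbf{e}_j)$ carries exactly the MPS departure rate $d_j(\bar n)$ and arrivals again carry rate $\lambda_i$. The sole discrepancy is at the boundary: from $(\bar 0)$ a completion empties the system, $(\bar 0)\to\emptyset$ at rate $\mu$, and $\emptyset\to(\bar 0)$ at rate $\Lambda$. Write $q(\bar n):=P(\bm{N}_1^{MROS}=\bar n)$ and $q_\emptyset$ for the mass of $\emptyset$.

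Third, I would verify the ansatz $q(\bar n)=\rho\,\pi(\bar n)$ directly in the balance equations, the key algebraic fact being the full-utilisation identity $\sum_{i}d_i(\bar n)=\mu\,\big(\sum_i p_i\beta_i(\bar n)\big)\big/\big(\sum_j p_j\beta_j(\bar n)\big)=\mu$ valid for every $\bar n\ne 0$. For an interior state $\bar n\ne 0$ the MROS out-rate is $\Lambda+\mu=\Lambda+\sum_i d_i(\bar n)$, so substituting $q=\rho\pi$ turns the MROS balance equation into exactly $\rho$ times the MPS balance equation, which holds by definition of $\pi$. For the boundary I would read off the cut between $\emptyset$ and $(\bar 0)$, namely $\Lambda q_\emptyset=\mu\,q(\bar 0)$, to obtain $q_\emptyset=\pi(\bar 0)$ (using $\rho=\Lambda/\mu$), and then check the equation at $(\bar 0)$ on its own; via the MPS boundary relation $\Lambda\pi(\bar 0)=\sum_i d_i(\mathbf{e}_i)\pi(\mathbf{e}_i)$ it collapses to $\Lambda\pi(\bar 0)(1+\rho)=\rho\pi(\bar 0)(\Lambda+\mu)$, which is an identity since $\rho\mu=\Lambda$. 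As the chain is irreducible and positive recurrent under the stability assumption $\Lambda<\mu$, uniqueness of the stationary distribution then forces $q(\bar n)=\rho\pi(\bar n)$ for all $\bar n\ge 0$.

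I expect the main obstacle to be the boundary bookkeeping rather than the interior computation: one must correctly isolate the empty state $\emptyset$ of MROS, which has no counterpart in MPS, and confirm the ansatz is consistent both with the $\emptyset$--$(\bar 0)$ balance and with normalisation. As a sanity check I would note that the same full-utilisation identity makes the aggregate count $|\bar n|$ in MPS an $M/M/1$ birth--death chain, whence $\pi(\bar 0)=1-\rho$; then $q_\emptyset+\sum_{\bar n}q(\bar n)=(1-\rho)+\rho\sum_{\bar n}\pi(\bar n)=1$, so the normalisation is satisfied automatically and the lemma follows.
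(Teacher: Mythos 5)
Your proposal is correct and takes essentially the same route as the paper: guess the ansatz $\hat{\pi}(1,\bar n)=\rho\,\pi(\bar n)$, verify it in the global balance equations of the two Markov chains using the full-utilization identity $\sum_i d_i(\bar n)=\mu$, and conclude by uniqueness of the stationary distribution under $\Lambda<\mu$. Your slightly more explicit boundary bookkeeping (the cut $\Lambda q_\emptyset=\mu q(\bar 0)$ plus the separate check at $\bar 0$) and the M/M/1 justification of $\pi(\bar 0)=1-\rho$ are just expanded versions of steps the paper asserts directly.
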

\begin{proof}
 We first simplify the notations as follows.
Let $\pi(\bar{n}):= P(\bm{N}^{MPS}=\bar{n})$ and
$\hat{\pi}(1,\bar{n}):= P(\bm{N}_1^{MROS}=\bar{n}).$
Let $\hat{\pi}(0,\bar{0})$ denote the probability that
the MROS system has no customers and is idle. The statement of
the lemma now requires us to prove that
$\hat{\pi}(1,\bar{n}) = \rho \pi(\bar{n}).$ To prove this result,
consider the balance equation for the MPS system where
$\pi$ shall denote the stationary invariant distribution for
the system. The assumption $\Lambda < \mu$ implies that the
underlying Markov process is ergodic and hence the stationary
distribution $\pi$ is unique. For $ \bar{n}$ such that
$|\bar{n}| \geq 0,$ the  {global}
balance equations for the MPS$(\bar{\alpha})$ system are
\begin{eqnarray*}
&~&(\Lambda + \sum_{i=1}^N\left(\frac{\beta_i(\bar{n})p_i}{\sum_{j=1}^N p_j \beta_j(\bar{n})}
\right)\mu \mathbbm{1}_{\{|\bar{n}| > 0\}})\pi(\bar{n}) \\
&=& \sum_{i=1}^N \lambda_i \mathbbm{1}_
{\{n_i > 0\}} \pi(\bar{n} - e_i) \nonumber \\
&+&
\sum_{i=1}^N \left(\frac{\beta_i(\bar{n} + e_i)p_i}{\sum_{j=1}^N p_j \beta_j(\bar{n} + e_i)}
\right) \mu\pi(\bar{n} + e_i).
\end{eqnarray*}

Now since $$\sum_{i=1}^N\left(\frac{\beta_i(\bar{n})p_i}{\sum_{j=1}^N p_j \beta_j(\bar{n})}
\right) = 1,$$ the balance equations can be written as

\footnotesize
\begin{eqnarray}
\label{eq:mps}
(\Lambda + \mu \mathbbm{1}_{\{|\bar{n}| > 0\}})\pi(\bar{n}) &=& \sum_{i=1}^N \lambda_i \mathbbm{1}_
{\{n_i > 0\}} \pi(\bar{n} - e_i) \\
&+&
\sum_{i=1}^N \left(\frac{\beta_i(\bar{n} + e_i)p_i}{\sum_{j=1}^N p_j \beta_j(\bar{n} + e_i)}
\right) \mu\pi(\bar{n} + e_i). \nonumber
\end{eqnarray}
\normalsize

Similarly, the  {global} balance equations for the MROS$(\bar{\alpha})$ system are
as follows for $ \bar{n}$ such that $|\bar{n}| \geq 0.$

\begin{eqnarray}
\label{eq:mros}
&~&(\Lambda + \mu\mathbbm{1}_{\{|\bar{n}| > 0\}})\hat{\pi}(1,\bar{n}) = \sum_{i=1}^N \lambda_i \mathbbm{1}_
{\{n_i > 0\}} \hat{\pi}(1,\bar{n} - e_i) \nonumber \\
&+& \sum_{i=1}^N \left(\frac{\beta_i(\bar{n} + e_i)p_i}{\sum_{j=1}^N p_j \beta_j(\bar{n} + e_i)}.
\right) \mu\hat{\pi}(1,\bar{n} + e_i)
\end{eqnarray}

Additionally, the idle system should satisfy
\begin{equation}
 \Lambda \hat{\pi}(0,\bar{0}) = \mu \hat{\pi}(1,\bar{0})
\end{equation}
where $\hat{\pi}(0,\bar{0}) = 1- \rho$ is the probability
that the system is empty.
Now again, the assumption $\Lambda < \mu$ implies that the
underlying Markov process is ergodic and hence the stationary
distribution $\hat{\pi}$ is also unique. Therefore to prove the lemma,
it is sufficient to check if the  {global} balance equations for the
MROS system given by Eq. \eqref{eq:mros} are satisfied when
$\hat{\pi}(1,\bar{n}) = \rho \pi(\bar{n}).$

Now from Eq. \eqref{eq:mros} and assuming that
$\hat{\pi}(1,\bar{n}) = \rho \pi(\bar{n}),$ we have
\begin{eqnarray*}
\label{eq:mros2}
(\Lambda + \mu\mathbbm{1}_{\{|\bar{n}| > 0\}})\hat{\pi}(1,\bar{n})  - \sum_{i=1}^N \lambda_i \mathbbm{1}_
{\{n_i > 0\}} \hat{\pi}(1,\bar{n} - e_i) \nonumber \\
- \sum_{i=1}^N \left(\frac{\beta_i(\bar{n} + e_i)p_i}{\sum_{j=1}^N p_j \beta_j(\bar{n} + e_i)}
\right) \mu\hat{\pi}(1,\bar{n} + e_i) \nonumber \\
 = (\Lambda + \mu\mathbbm{1}_{\{|\bar{n}| > 0\}})\rho{\pi}(\bar{n})  - \sum_{i=1}^N \lambda_i \mathbbm{1}_
{\{n_i > 0\}} \rho{\pi}(\bar{n} - e_i) \nonumber \\
- \sum_{i=1}^N \left(\frac{\beta_i(\bar{n} + e_i)p_i}{\sum_{j=1}^N p_j \beta_j(\bar{n} + e_i)}
\right) \mu\rho{\pi}(\bar{n} + e_i) = 0. \nonumber \\
\end{eqnarray*}

The last equality follows from Eq. \eqref{eq:mps} after dividing throughout by $\rho.$
Similarly,
\begin{eqnarray}
\label{eq:mros3}
 \Lambda \hat{\pi}(0,\bar{0})  - \mu \hat{\pi}(1,\bar{0}) & = & \Lambda \hat{\pi}(0,\bar{0})  - \mu \rho{\pi}(\bar{0}) \nonumber \\
  &=& \mu \left(\rho \hat{\pi}(0,\bar{0})  -  \rho{\pi}(\bar{0})\right)\nonumber \\
   &=& \mu \left(\rho \hat{\pi}(0,\bar{0})  -  \rho(1 - \rho)\right) \nonumber \\
   &=&0.
\end{eqnarray}
Here the third equality is from the fact that $\pi(\bar{0}) = (1-\rho)$
is the probability that the MPS$(\bar{\alpha})$ system is empty.
Clearly, substituting $\hat{\pi}(1,\bar{n}) = \rho \pi(\bar{n}),$
satisfies the balance equations for the MROS system. Since $\hat{\pi}$
is the unique invariant distribution, the statement of the lemma follows.
\end{proof}

\bigskip

We now have the following corollary that establishes the desired
equivalence between DPS (GPS) and DROS (resp. WFQ) scheduling policies.
Note that the result is true only for the case when all customers
have identically distributed service requirements. The equivalence
result is not true in general when the customer classes
differ in their service requirements.

\bigskip

\begin{corollary} \ \\
 \begin{itemize}
  \item $\rho P(\bm{{S}_i}(\bar{\infty}) > t) =  P(\bm{{W}_i}(\bar{\infty}) > t)$ \\where $\bm{{S}_i}(\bar{\infty})$
  denotes the sojourn time of a Class~$i$ customer in DPS system and $\bm{{W}_i}(\bar{\infty})$
  denotes the waiting time of a Class~$i$ customer in DROS system.
  \item $\rho P(\bm{{S}_i}(\bar{e}) > t) =  P(\bm{{W}_i}(\bar{e}) > t)$
  \\where $\bm{{S}_i}(\bar{e})$
  denotes the sojourn time of a Class~$i$ customer in GPS system and $\bm{{W}_i}(\bar{e})$
  denotes the waiting time of a Class~$i$ customer in WFQ system.
 \end{itemize}
\end{corollary}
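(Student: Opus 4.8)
The plan is to derive both assertions as direct specializations of the Theorem, which was proved for an arbitrary parameter vector $\bar{\alpha}$. The only substantive point is to invoke the identifications between the generalized policies MPS$(\bar{\alpha})$, MROS$(\bar{\alpha})$ and the four classical policies that were recorded when those policies were introduced. Since the Theorem already supplies the identity $\rho P(\bm{S_i}(\bar{\alpha}) > t) = P(\bm{W_i}(\bar{\alpha}) > t)$ for every $\bar{\alpha}$, all that remains is to read off the two relevant instances.

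First I would recall that, because $\beta_i(\bar{n}) = \min(n_i,\alpha_i)$, setting $\bar{\alpha} = \bar{\infty}$ yields $\beta_i = n_i$, so the per-customer service rate $\frac{p_i}{\sum_{j=1}^N p_j \beta_j}$ in MPS$(\bar{\infty})$ collapses to $\frac{p_i}{\sum_{j=1}^N p_j n_j}$, which is exactly the DPS rate; correspondingly the selection probability in MROS$(\bar{\infty})$ collapses to the DROS probability. Hence $\bm{S_i}(\bar{\infty})$ is by definition the sojourn time of a Class~$i$ customer under DPS and $\bm{W_i}(\bar{\infty})$ is the waiting time of a Class~$i$ customer under DROS. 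Substituting $\bar{\alpha} = \bar{\infty}$ into the conclusion of the Theorem then gives the first item verbatim.

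For the second item I would proceed identically with $\bar{\alpha} = \bar{e} = (1,\ldots,1)$. In this case $\beta_i = \min(n_i,1) = \phi_i$, so the MPS$(\bar{e})$ service rate becomes $\frac{p_i}{\sum_{j=1}^N p_j \phi_j}$, i.e.\ the GPS rate, and the MROS$(\bar{e})$ selection rule becomes the WFQ rule. Thus $\bm{S_i}(\bar{e})$ and $\bm{W_i}(\bar{e})$ are respectively the sojourn time under GPS and the waiting time under WFQ, and the Theorem applied at $\bar{\alpha} = \bar{e}$ yields the second item.

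I do not anticipate a genuine obstacle: the Corollary is purely a matter of substituting two particular values of $\bar{\alpha}$ into an already-established identity. The only thing requiring a moment of care is confirming that the head-of-line and FIFO bookkeeping built into the definitions of MPS and MROS agrees, at $\bar{\alpha} = \bar{e}$, with the way GPS and WFQ serve only the head-of-line customer of each class. But this agreement is exactly what was arranged when the generalized policies were defined through $\beta_i = \min(n_i,\alpha_i)$, so no new argument is needed beyond citing the Theorem.
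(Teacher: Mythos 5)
Your proposal is correct and matches the paper's own treatment: the corollary is obtained by specializing the Theorem at $\bar{\alpha}=\bar{\infty}$ and $\bar{\alpha}=\bar{e}$, using the identifications MPS$(\bar{\infty})=$ DPS, MROS$(\bar{\infty})=$ DROS, MPS$(\bar{e})=$ GPS, MROS$(\bar{e})=$ WFQ already established when the generalized policies were defined. No further argument is needed.
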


\section{Discussion}
In this paper, we have proposed two multiclass policies, namely MPS
and MROS, that generalize some important multiclass policies from the literature.
Our policies are parameterized by a vector $\bar{\alpha}$ that can be used to control
performance metrics like the mean delay or mean waiting time per class.
Restricting to the special case where the multiclass customers arrive according to a
Poisson process and have indepndent and exponential service requirements,
we show that the tail of the sojourn time distribution for a class $i$ customer in a
system with the MPS policy is a constant multiple of the tail of the waiting time
distribution of a class $i$ customer in a system with the MROS policy.
As special cases, we have thus proved the above equivalence between DPS (GPS) and DROS
(resp. WFQ) scheduling policies.

 {It is worth mentioning that Borst et al \cite{Borst03} have shown the sojourn time
equivalence between ROS and processor sharing for a more general case when the
arrival process is a general renewal process. While our analysis for MPS and MROS
assumes a Poisson arrival process, it would be of interest to investigate if our
equivalence result is true when the arrival process is a general renewal process.
This is part of future work.
}

\section*{Acknowledgements}
Both authors would like to thank Dr. Rudesindo Queija from CWI, Amsterdam for
several discussions on this work. Research for the second author is
partially supported by the French Agence Nationale de la Recherche (ANR) through the project
ANR-15-CE25-0004 (ANR JCJC RACON). The authors would also
like to acknowledge the support of CEFIPRA, an Indo-French centre for promotion
of advanced research, specifically grant FC/DST-Inria-2016-01/448.


\begin{thebibliography}{1}

\bibitem{AAA06}
E.~Altman, K.~Avrachenkov and U.~Ayesta,
\newblock ``A survey on discriminatory processor sharing'',
\newblock {\em Queueing Systems}, vol. 53, no. 1-2, pp. 53--63, 2006.

\bibitem{Avi-Itzhak88}
B.~Avi-Itzhak and S.~Halfin,
\newblock ``Expected response times in a non-symmetric time sharing queue with a
limited number of service positions,''\newblock {\em Proceedings of ITC 12}, 1988.

\bibitem{Aetal05}
K.~Avrachenkov, U.~Ayesta, P.~Brown and R.~N\'{u}nez-Queija,
\newblock ``Discriminatory processor sharing revisited,''
\newblock {\em Proceedings of IEEE INFOCOM 2005}, 2005.

\bibitem{Borst03}
S.C. Borst, O.J. Boxma, J.A. Morrison and R. N\'{u}\~{n}ez-Queija,
\newblock ``The equivalence between processor sharing and service in random order''
\newblock {\em Operations Research Letters}, vol. 31, no. 4, pp. 254--262, July 2003.

\bibitem{Haviv07}
M.~Haviv and J.~van~der Wal,
\newblock ``Waiting times in queues with relative priorities,''
\newblock {\em Operations Research Letters}, vol. 35, no. 5,
pp. 591--594, September 2007.

\bibitem{Haviv08}
M.~Haviv and J.~van~der Wal,
\newblock ``Mean sojourn times for phase-type discriminatory processor sharing
  systems,''
\newblock {\em European Journal of Operational Research}, vol. 189, no. 2, pp.
  375--386, September 2008.

\bibitem{Kleinrock67-2}
L.~Kleinrock,
\newblock Time-shared systems: a theoretical treatment.
\newblock {\em Journal of ACM}, 14(2):242--261, April 1967.

\bibitem{Aalto07}
S. Aalto, U. Ayesta, S. Borst, V. Misra and R. N\'{u}\~{n}ez-Queija,
\newblock ``Beyond Processor Sharing,''
\newblock {\em ACM Sigmetrics Performance Evaluation Review}, vol. 34, pp. 36--43, 2007,

\bibitem{Haviv97}
M.~Haviv and J.~van~der Wal,
\newblock Equilibrium strategies for processor sharing and random queues with
  relative priorities.
\newblock {\em Probability in the Engineering and Informational Sciences},
  null(4):403--412, October 1997.

\bibitem{Haviv07-2}
M.~Haviv and J.~van~der Wal,
\newblock Waiting times in queues with relative priorities.
\newblock {\em Operation Research Letters}, 35:591--594, 2007.


\bibitem{Kim04}
J.~Kim, and B.~Kim,
\newblock `` Sojourn time distribution in M/M/1 queue with discriminatory processor sharing,''
\newblock {\em Performance Evaluation}, vol. 58, pp. 341--365, July 2004,


\bibitem{Kim11}
J.~Kim, J.~Kim and B.~Kim,
\newblock `` Analysis of M/G/1 queue with discriminatory random order service policy,''
\newblock {\em Performance Evaluation}, vol. 68, pp. 256--270, 2011,

\bibitem{Parekh93}
A.~K. Parekh and R.~G. Gallager,
\newblock A generalized processor sharing approach to flow control in
  integrated services networks: The single-node case.
\newblock {\em IEEE/ACM Transactions on Networking}, 1(3):344--357, 1993.

\bibitem{Zhang94}
Z.~Zhang, D.~Towsley and J.~Kurose,
\newblock Statistical analysis of generalized processor sharing scheduling
  discipline.
\newblock In {\em Proceedings of {SIGCOMM}}, pages 68--77, 1994.

\bibitem{Fayolle80}
G.~Fayolle, I.~Mitrani and R.~Iasnogorodski,
\newblock ``Sharing a processor among many job classes,''
\newblock {\em Journal of ACM}, vol. 27, no. 3, pp. 519--532, July 1980.

\bibitem{Demers90}
A.~ Demers,
\newblock ``Analysis and Simulation of a Fair Queueing Algorithm,''
\newblock {\em Internetworking: Research and Experience}, vol. 1, pp.  3--26, 1990.

\bibitem{KMK}
A.~Kumar, D.~Manjunath and J.~Kuri,
\newblock `` Communication Networking: An Analytical Approach,''
\newblock {\em Elsevier}, 2004.

\bibitem{Fayolle79}
G.~Fayolle and R.~Iasnogorodski,
\newblock ``Two Coupled Processors: The reduction to a Riemann-Hilbert Problem,''
\newblock {\em Wahrscheinlichkeitstheorie verw Gebiete}, vol. 47, pp. 325--351, 1979.

\bibitem{Morrison93}
J.~Morrison,
\newblock ``Head of the line processor sharing for many symmetric queues with finite capacity,''
\newblock {\em Queueing Systems}, vol. 14, pp. 215--237, 1993.

\bibitem{Zhang11}
J.~Zhang, J.G.~Dai and B. Zwart,
\newblock ``Diffusion limits of limited processor sharing queues,''
\newblock {\em The Annals of Applied Probability}, vol. 21, pp. 745--799, 2011.

\bibitem{Lindvall02}
T.~Lindvall,
\newblock ``Lectures on the coupling method,'' \newblock {\em Courier Corporation}, 2002.

\end{thebibliography}
\end{document}